\theoremstyle{plain}
\newtheorem{thm}{Theorem}[section]
\newtheorem{lem}[thm]{Lemma}
\newtheorem{cor}[thm]{Corollary}
\newtheorem{prop}[thm]{Proposition}
\theoremstyle{definition}
\newtheorem{defn}{Definition}[section]
\theoremstyle{remark}
\newtheorem*{remark}{Remark}
\numberwithin{equation}{section}
\newcommand{\ga}{\gamma}
\newcommand{\Ga}{\Gamma}
\newcommand{\da}{\delta}
\newcommand{\Da}{\Delta}
\newcommand{\ta}{\theta}
\newcommand{\la}{\lambda}
\newcommand{\fc}{\frac}
\newcommand{\na}{\nabla}
\newcommand{\lt}{\left}
\newcommand{\rt}{\right}
\newcommand{\mf}{\mathbf}
\newcommand{\mb}{\mathbb}
\newcommand{\ml}{\mathcal}
\newcommand{\bs}{\boldsymbol}
\newcommand{\wt}{\widetilde}
\begin{document}
\fontsize{13}{16pt plus.4pt minus.3pt}\selectfont

\title{Proofs of some simplified characterizations of the\\ ground states of spin-1 Bose-Einstein condensates\footnote{This work was partially supported by the National Science Council of the Republic of China under Contract Nos. 99-2115-M-002-003-MY3.}
}
\author[,1]{Liren Lin\footnote{b90201033@ntu.edu.tw}}
\author[,1,2]{
I-Liang Chern\footnote{chern@math.ntu.edu.tw}}
\affil[1]{Department of Mathematics, National Taiwan University, Taipei 106, Taiwan}
\affil[2]{National Center for Theoretical Sciences, Taipei Office, Taiwan}
\date{}
\maketitle
\begin{abstract}
We justify some characterizations of the ground states of spin-1 Bose-Einstein condensates exhibited from numerical simulations. For ferromagnetic systems, we show the validity of the single-mode approximation (SMA). For an antiferromagnetic system with nonzero magnetization, we prove the vanishing of the $m_F=0$ component. In the end of the paper some remaining degenerate situations are also discussed. The proofs of the main results are all based on a simple observation, that a redistribution of masses among different components will reduce the kinetic energy.
\end{abstract}

%------------------------------------------------------------------------------%
%                                                                              %
%                                                                              %
%                                SECTION 1                                     %
%                                                                              %
%                                                                              %
%------------------------------------------------------------------------------%

\section{Introduction}
At ultra low temperature, massive bosons could occupy the same lowest-energy 
state and form the so-called Bose-Einstein condensates (BECs). This phenomenon 
was predicted by Bose and Einstein in 1925, and was first realized on several 
alkali atomic gases in 1995 by laser cooling technique
\cite{Anderson1995.7,Bradley1995.8,Davis1995.11}. 
In early experiments, the atoms were confined in magnetic traps. In this situation the spin degrees of freedom are frozen. Through the mean-field approximation the system is then described by a scalar wave function, which satisfies the Gross-Pitaevskii (GP) equation 
\cite{Dalfovo1999.4,Gross1961,Pitaevskii1961}. In contrast, in an optically trapped atomic BEC all hyperfine spin states can be active simultaneously, and a spin-$F$ BEC is then described by a vector wave function $\Psi=(\psi_{F},\psi_{F-1},\cdots,\psi_{-F})^T$, where the $j$-th component corresponds to the $m_F = j$ hyperfine state \cite{Stamper1998.3,Stenger1998.11,Miesner1999,Barrett01,Gorlitz2003}. The theory of such spinor BEC was first developed independently by several groups \cite{Ohmi1998,Ho1998,Law1998.12}. After these early studies, spinor BEC has become an area of great research interest.

\subsection{Mathematical model for spin-1 BEC}

For a spin-$1$ BEC, the vector wave function 
$\Psi=(\psi_1,\psi_0,\psi_{-1})^T$ satisfies a generalized GP equation:
\begin{equation} \label{GGP}
i\hbar \partial_t \Psi = \fc{\da{E}}{\da\Psi^{*}},
\end{equation}
where the Hamiltonian is given by
\[
E[\Psi] := \int_D\bigg\{ 
\fc{\hbar^2}{2m_a}\sum_j|\na\psi_j|^2
+ V(x)|\Psi|^2 + \fc{c_n}{2}|\Psi|^4 + 
\fc{c_s}{2}\lt|\Psi^{*}S \Psi\rt|^2 
\bigg\}dx.
\]
Here $D$ is a domain in $\mb{R}^d$, $\hbar$ is the reduced Planck constant, $m_a$ is the atomic mass, 
$V$ is a locally bounded real-valued function representing the trap potential, $\Psi^{*}$ is the Hermitian of $\Psi$, and 
$S=(S_x,S_y,S_z)$ is the triple of spin-1 Pauli matrices:
\[
S_x = \fc{1}{\sqrt{2}}
\lt( \begin{array}{ccc}
0 & 1 & 0 \\
1 & 0 & 1 \\
0 & 1 & 0
\end{array}\rt),  \
S_y = \fc{i}{\sqrt{2}}
\lt( \begin{array}{ccc}
0 & -1 & 0 \\
1 & 0 & -1 \\
0 & 1 & 0
\end{array}\rt),  \
S_z =
\lt( \begin{array}{ccc}
1 & 0 & 0 \\
0 & 0 & 0 \\
0 & 0 & -1
\end{array}\rt).
\]
So $\Psi^{*}S\Psi$ denotes the vector
$(\Psi^{*}S_x\Psi,\Psi^{*}S_y\Psi,\Psi^{*}S_z\Psi)$. Also note that 
$|\Psi|$ denotes the Euclidean length $(\sum_j|\psi_j|^2)^{1/2}$, and similiarly for $|\na\psi_j|$ and $|\Psi^{*}S\Psi|$.
The parameters $c_n$ and $c_s$ are real constants given by
\[
c_n = \fc{4\pi \hbar^2}{3m_a}(a_0 + 2a_2), \
c_s = \fc{4\pi \hbar^2}{3m_a}(-a_0 + a_2),
\]
where $a_0$ and $a_2$ are respectively the $s$-wave scattering lengths for scattering channels of total hyperfine spin zero and spin two. The parameter $c_n$ characterizes the spin-independent interaction, and the parameter $c_s$ characterizes the spin-exchange interaction. For $c_n < 0$ (resp. $c_n > 0$), the spin-independent interaction is attractive (resp. repulsive). For $c_s < 0$ (resp. $c_s > 0$), the spin-exchange interaction is ferromagnetic (resp. antiferromagnetic). Typical examples of ferromagnetic and antiferromagnetic systems are $^{87}$Rb and $^{23}$Na condensates.

The generalized GP equation (\ref{GGP}) implies two conserved quantities:
\begin{align*}
\textup{(C1)}\qquad&\int_{D}|\Psi|^2 = N,\\
\textup{(C2)}\qquad&\int_{D}\lt(|\psi_1|^2 - |\psi_{-1}|^2\rt) = M,
\end{align*}
where $N$ is the total number of atoms and $M$ is the total magnetization. For the system to be nontrivial, we assume $N > 0$. We also assume $|M|<N$ 
(note that obviously $|M|\le N$), for if $|M|=N$ the system reduces to a single component BEC, which is a trivial case for all considerations in this work. Now we say $\Psi$ is a ground state if it is a minimizer of $E$ under the above two constraints.

\subsection{Innovation and organization}

In researches concerning ground states of spin-1 BEC, the following ansatz was often adopted:
\[
\psi_j=c_j\psi \quad \textup{for each} \quad j,
\] 
where $c_j$ are constants and $\psi$ is a function independent of $j$. This is called the single-mode approximation (SMA) in the physics literature \cite{Law1998.12,Goldstein1999.5,Pu1999.8,Ho2000.5.1,Pu2000.5.11,Duan2002.2}. It has been found \cite{Yi2002.7} from numerical simulations that ground states obey the SMA exactly for ferromagnetic systems (and does not in general for antiferromagnetic ones), and hence can effectively be characterized as one-component systems. The first goal of this paper is to analytically confirm this observation. On the other hand, for antiferromagnetic systems, we will show that $\psi_0 \equiv 0$ when $M \ne 0$, another well-known phenomenon from numerical simulations \cite{Bao2008.5,CCW2011} not being rigorously proved before. For the degenerate case $M = 0$, however, the SMA is again valid while ground states are not unique, and $\psi_0$ does not necessarily vanish. It's interesting that although the two phenomena (SMA and vanishing of $\psi_0$) look quite irrelevant to each other, they can be proved by the same simple principle, that a redistribution of masses between different components will decrease the kinetic energy. 

The paper is organized as follows. Section 2 is the preliminary, where we reformulate the mathematical model more precisely, and then provide a result of maximum principle which is crucial in justifying the expected characterizations. In Section 2.2 the idea of mass redistribution is introduced. Sections 3 and 4 treat respectively the ferromagnetic and antiferromagnetic systems.

%------------------------------------------------------------------------------%
%                                                                              %
%                                                                              %
%                                SECTION 2                                     %
%                                                                              %
%                                                                              %
%------------------------------------------------------------------------------%

\section{Preliminary}

For notational simplicity, let's redefine
\[
E[\Psi]=\int_D\bigg\{
\sum_j|\na\psi_j|^2 + V|\Psi|^2 + c_n|\Psi|^4 + c_s|\Psi^{*}S\Psi|^2
\bigg\}.
\]
This causes no loss of generality for the phenomena we are going to investigate.
The admissible class is
\begin{align*}
\ml{C} = \lt\{ \Psi\in\big(H^1(D) \cap L^4(D) \cap L^2(D,Vdx)\big)^3 \ \Big| \ 
\Psi \mbox{ satisfies (C1) and (C2)} \rt\},
\end{align*}
where $L^2(D,Vdx)$ consists of all functions $f$ such that 
$\int_D V|f|^2<\infty$. Let $\mf{u}$ denotes $(u_1,u_0,u_{-1})$. We also define
\begin{align*}
\ml{A} &= \lt\{ \mf{u} \in \ml{C} \ | \ 
u_j \ge 0 \mbox{ for each } j \rt\}; \\
\ml{A}_1 &= \lt\{ \mf{u} \in \ml{A} \ | \ 
\mf{u}=(\ga_1 f,\ga_0 f,\ga_{-1} f)
\mbox{ for some constants }\ga_j\mbox{ and some function }f\rt\}; \\
\ml{A}_2 &= \lt\{ \mf{u} \in \ml{A} \ | \ 
u_0 \equiv 0 \rt\}.
\end{align*}
Let's also use $\bs{\ga}$ to denote $(\ga_1,\ga_0,\ga_{-1})$, so $(\ga_1 f,\ga_0 f,\ga_{-1} f)$ can be abbreviated as $\bs{\ga}f$. 

In Section 2.1, we introduce a common reduction which shows that to study ground states we can simply consider $\ml{A}$ instead of $\ml{C}$. Indeed, $\ml{A}$ consists just the amplitudes of elements in $\ml{C}$. And $\ml{A}_1$ (resp. $\ml{A}_2$) corresponds to the set of all elements obeying the SMA (resp. with vanishing zeroth components). For the moment, we do not consider any boundary condition for simplicity. See the remark in the end of Section 3. 

\subsection{Reduction from $\ml{C}$ to $\ml{A}$}

Given $\Psi\in\ml{C}$. Let $u_j e^{i\ta_j}$ be the polar form of $\psi_j$. It's easy to check that if $\Psi$ is a ground state, that is $\Psi$ minimizes $E$ over $\ml{C}$, then the $\ta_j$'s are constants satisfying
\begin{equation}\label{theta}
\cos{(\ta_1-2\ta_0+\ta_{-1})} = \pm 1 \quad\textup{for}\quad c_s \lessgtr 0,
\end{equation}
and 
\begin{equation}\label{H}
E[\Psi] = \int_D\bigg\{\sum_j|\na u_j|^2 + V|\mf{u}|^2 + c_n|\mf{u}|^4 
+ c_s \Big[ 2u_0^2(u_1\pm u_{-1})^2 + (u_1^2-u_{-1}^2)^2\Big]\bigg\},
\end{equation}
where the plus-minus sign $\pm$ corresponds to $c_s\lessgtr 0$. Let's now define 
$\mb{E}:\ml{A}\to\mb{R}$, $\mb{E}[\mf{u}]$ is given by the right-hand side of (\ref{H}). What we claimed is if $\Psi$ is a ground state, then
$E[\Psi]=\mb{E}[(|\psi_1|,|\psi_0|,|\psi_{-1}|)]$. Conversely, if any $\mf{u}\in\ml{A}$ satisfies
\[
\mb{E}[\mf{u}]=\min_{\mf{v}\in\ml{A}}\mb{E}[\mf{v}],
\]
the vector $\Psi$ defined by $\psi_j=u_j e^{i\ta_j}$ is a ground state as long as the $\ta_j$'s are constants satisfying (\ref{theta}). Thus, studying ground states of $E$ is equivalent to studying minimizers of $\mb{E}$. Without loss of generality, we will henceforth consider $\mb{E}$ instead of the original $E$. 

For convenience let's use $H$ to denote the integrand of $\mb{E}$, i.e. 
$\mb{E}[\mf{u}]=\int_D H(\mf{u})$. We also write
$H = H_1 + H_2$, where
\begin{align*}
H_1(\mf{u}) &= \sum_j|\na u_j|^2 + c_s\Big[ 2u_0^2(u_1 \pm u_{-1})^2 + (u_1^2 - u_{-1}^2)^2 \Big], \\
H_2(\mf{u}) &= V|\mf{u}|^2 + c_n|\mf{u}|^4.
\end{align*}
This splitting of $H$ has no physical meaning but only for convenience of later discussion. 

We shall denote the set of all minimizers of $\mb{E}$ over $\ml{A}$ by $\ml{G}$.
The Euler-Lagrange system for $\mf{u}\in\ml{G}$ is given by the following coupled Gross-Pitaevskii equations:
\begin{equation}\label{ELeq}
\lt\{
\begin{aligned}
(\mu + \lambda) u_1    &= \ml{L}u_1 + 2c_s\lt[u_0^2(u_1\pm u_{-1})+u_1(u_1^2-u_{-1}^2)\rt]\\
\mu u_0                &= \ml{L}u_0 + 2c_s u_0 (u_1\pm u_{-1})^2\\
(\mu - \lambda) u_{-1} &= \ml{L}u_{-1} + 2c_s\lt[u_0^2(u_{-1}\pm u_1)+u_{-1}(u_{-1}^2-u_1^2)\rt],
\end{aligned}\rt.
\end{equation}
where 
$\ml{L}=-\Da + V + 2c_n|\mf{u}|^2$, $\la$ and $\mu$ are the Lagrange multipliers. We remark that in this paper we do not involve ourselves in the problem of existence. To best illustrate the simplicity of our method, we just assume there is a ground state. (see \cite{lieb-2000,cao_2011,bao_cai_2011} for related concerns of existence problem). Also note that $\mf{u}\in\ml{G}$ is continuously differentiable by standard regularity theorem (see e.g. \cite[10.2]{LiebLoss}).

The following lemma will be crucial in our characterizations of ground states.

\begin{lem}\label{lemma}
If $\mf{u}\in\ml{G}$, then for each $j$, either $u_j \equiv 0$ or $u_j>0$ 
on all of $D$.
\end{lem}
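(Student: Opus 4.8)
The plan is to prove the dichotomy componentwise by applying the strong maximum principle to each $u_j$, using that $\mf{u}\in\ml{G}$ solves the Euler--Lagrange system (\ref{ELeq}) and that each $u_j\ge 0$. Since a minimizer is $C^1$ by the regularity remark preceding the lemma, the functions $|\mf{u}|^2$, $u_0^2$, $(u_1\pm u_{-1})^2$ and $u_j^2$ are all continuous, hence locally bounded; together with the local boundedness of $V$, this is exactly what is needed to cast each equation into a form to which Hopf's principle applies (after Schauder regularity promotes $u_j$ to $C^2_{\mathrm{loc}}$, the right-hand sides below being continuous).

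First I would rewrite each line of (\ref{ELeq}) by isolating the Laplacian as
\[
-\Da u_j + c_j(x)\,u_j = s_j(x),
\]
where $c_j$ collects all terms proportional to $u_j$ and $s_j$ is the remainder. For $u_0$ every term of the second equation is a multiple of $u_0$, so $s_0\equiv 0$ and $c_0 = V + 2c_n|\mf{u}|^2 + 2c_s(u_1\pm u_{-1})^2 - \mu$, which is locally bounded. For $u_1$ the only term not proportional to $u_1$ comes from $2c_s u_0^2(u_1\pm u_{-1})$ and contributes $s_1 = \mp 2c_s u_0^2 u_{-1}$; here lies the one point that must be checked by hand. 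Since the sign $+$ goes with $c_s<0$ and $-$ with $c_s>0$, in the ferromagnetic case $s_1 = -2c_s u_0^2 u_{-1}\ge 0$ and in the antiferromagnetic case $s_1 = +2c_s u_0^2 u_{-1}\ge 0$, so $s_1\ge 0$ in both regimes because $u_0,u_{-1}\ge 0$. The coefficient $c_1 = V + 2c_n|\mf{u}|^2 + 2c_s u_0^2 + 2c_s(u_1^2-u_{-1}^2) - (\mu+\la)$ is again locally bounded, and the equation for $u_{-1}$ is handled identically by the $1\leftrightarrow -1$ symmetry.

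Next, fix any ball $B$ with $\ol{B}\subset D$ and split $c_j = c_j^+ - c_j^-$. On $B$ the equation reads $-\Da u_j + c_j^+ u_j = s_j + c_j^- u_j \ge 0$, with $c_j^+\ge 0$ bounded on $B$ and $u_j\ge 0$. Hopf's strong maximum principle then yields the alternative on $B$: if $u_j$ attains the value $0$ at an interior point of $B$, then $u_j\equiv 0$ on $B$. Consequently the zero set $\{x\in D: u_j(x)=0\}$ is open; being also closed by continuity and $D$ being connected, it is either empty or all of $D$, which is the asserted dichotomy.

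The substantive step is the second one: recognizing that, once the Laplacian is isolated, the genuinely coupled term $2c_s u_0^2(u_1\pm u_{-1})$ separates into a part absorbed into the locally bounded zeroth-order coefficient and a residual source $s_j$ whose sign is favorable in \emph{both} the ferromagnetic and antiferromagnetic cases. Everything else is a routine invocation of the strong maximum principle followed by a connectedness argument; the only hypothesis I use beyond the excerpt is that $D$ is connected.
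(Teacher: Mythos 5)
Your proof is correct and follows essentially the same route as the paper: isolate the Laplacian in each Euler--Lagrange equation, absorb the locally bounded zeroth-order terms, observe that the residual coupling term $\mp 2c_s u_0^2 u_{\mp 1}$ has a favorable sign in both the ferromagnetic and antiferromagnetic regimes (the step the paper dismisses as ``easy to verify''), and conclude via the strong maximum principle and connectedness of $D$. The only cosmetic difference is that the paper normalizes the sign of the zeroth-order coefficient by subtracting $Q_j u_j$ with $Q_j$ large on a compact set, whereas you split $c_j=c_j^+-c_j^-$; these are the same device.
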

\begin{proof}
For an arbitrary compact $K \subset D$, by subtracting respectively $Q_ju_j$, $j=1,0,-1$, from the three equations in (\ref{ELeq}) with large enough constants $Q_j$, and using the assumption $u_j \ge 0$, it's easy to verify that each $u_j$ satisfies
\[
\Da u_j + h_j u_j \le 0
\]
for some $h_j\le 0$ on $K$. Thus either $u_j>0$ or $u_j \equiv 0$ on $K$ by strong maximum principle. Since $K \subset D$ is arbitrary, the assertion of the lemma holds.
\end{proof}

\subsection{A kinetic-energy-reduced redistribution}\label{redis}

Consider an $n$-tuple of nonnegative functions $\mf{f}=(f_1,f_2,...,f_n) \in (H^1(D))^n$. Let $g=|\mf{f}|$.
It's well-known that  $|\na g|^2\le\sum_k|\na f_k|^2$. In fact, direct computation gives
\begin{equation}\label{equal}
\sum_k|\na f_k|^2 - |\na g|^2 = 
\lt\{\begin{aligned}
&\fc{1}{g^2}\sum_{j < k}|f_j\na f_k - f_k\na f_j|^2
&\mbox{ on where }g>0&\\
&0
&\mbox{ on where }g=0&.
\end{aligned}\rt.
\end{equation}
In particular, $|\na g|^2 = \sum_k|\na f_k|^2$ if and only if 
$f_j \na f_k - f_k \na f_j = 0$ for $j \ne k$.

The property above has a simple while interesting generalization, when $f_k^2$ ($k=1,2,...,n$) do not sum to a single $g^2$, but instead are redistributed into multiple parts. To be precise, we give the following definition.

\begin{defn}\label{MR}
Let $\mf{f}$ be as above, and
let $\mf{g}=(g_1,g_2,...,g_m)$ be an $m$-tuple of nonnegative functions. 
We say $\mf{g}$ is a mass redistribution of $\mf{f}$ if 
\begin{align*}
g_{\ell}^2=\sum_{k=1}^{n} a_{\ell k}f_k^2
\end{align*}
for each $\ell=1,...,m$, where $a_{\ell k}$ are nonnegative constants satisfying $\sum_{\ell=1}^m a_{\ell k}=1$, for each $k=1,...,n$.
\end{defn}

Note that $g=|\mf{f}|$ is the only mass redistribution of 
$\mf{f}$ with $m=1$. In general we have the following proposition.

\begin{prop}\label{prop2.2}
For any mass redistribution $\mf{g}$ of $\mf{f}$ as in \textup{Definition \ref{MR}}, we have 
\begin{itemize}
\item[\textup{(1)}]
$|\mf{g}|=|\mf{f}|$\textup{;}
\item[\textup{(2)}]
$\sum_{\ell=1}^m|\na{g}_{\ell}|^2 \le \sum_{k=1}^n|\na{f_k}|^2$. Equality holds if and only if $f_j\na f_k - f_k\na f_j = 0$ for each $j \ne k$ with $a_{\ell j}a_{\ell k}\ne 0$ for at least one $\ell$.
\end{itemize}
\end{prop}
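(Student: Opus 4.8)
The plan is to handle the two parts separately, with part (1) reducing to the normalization constraint and part (2) reducing to the single-part identity (\ref{equal}) applied componentwise. For part (1), I would directly compute
\[
|\mf{g}|^2 = \sum_{\ell=1}^m g_\ell^2 = \sum_{\ell=1}^m\sum_{k=1}^n a_{\ell k}f_k^2 = \sum_{k=1}^n\Big(\sum_{\ell=1}^m a_{\ell k}\Big)f_k^2 = \sum_{k=1}^n f_k^2 = |\mf{f}|^2,
\]
where the interchange of finite sums is trivial and the normalization $\sum_{\ell} a_{\ell k}=1$ is used in the penultimate equality.

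The key to part (2) is that each $g_\ell$ is itself a Euclidean length. Writing $g_\ell^2 = \sum_k(\st{a_{\ell k}}\,f_k)^2$, I set $\mf{f}^{(\ell)} := (\st{a_{\ell 1}}f_1,\ldots,\st{a_{\ell n}}f_n)$, so that $g_\ell = |\mf{f}^{(\ell)}|$. Since each $a_{\ell k}$ is a nonnegative constant, the functions $\st{a_{\ell k}}f_k$ are nonnegative and lie in $H^1(D)$, so identity (\ref{equal}) applies to the tuple $\mf{f}^{(\ell)}$. Using $\na(\st{a_{\ell k}}f_k)=\st{a_{\ell k}}\,\na f_k$ and
\[
(\st{a_{\ell j}}f_j)\,\na(\st{a_{\ell k}}f_k) - (\st{a_{\ell k}}f_k)\,\na(\st{a_{\ell j}}f_j) = \st{a_{\ell j}a_{\ell k}}\,(f_j\na f_k - f_k\na f_j),
\]
identity (\ref{equal}) becomes, for each $\ell$,
\[
\sum_k a_{\ell k}|\na f_k|^2 - |\na g_\ell|^2 = \fc{1}{g_\ell^2}\sum_{j<k}a_{\ell j}a_{\ell k}\,|f_j\na f_k - f_k\na f_j|^2 \ge 0
\]
on $\{g_\ell>0\}$, while the difference is $0$ on $\{g_\ell=0\}$.

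Summing this over $\ell$ and again invoking $\sum_{\ell} a_{\ell k}=1$, the left-hand side collapses to $\sum_k|\na f_k|^2 - \sum_\ell|\na g_\ell|^2$, which is thus a sum of nonnegative terms and in particular is $\ge 0$; this is the desired inequality. For the equality case, the sum vanishes (a.e.) iff every $\ell$-term vanishes. On $\{g_\ell>0\}$ the $\ell$-term vanishes precisely when $f_j\na f_k - f_k\na f_j = 0$ for each pair $j\ne k$ with $a_{\ell j}a_{\ell k}\ne 0$. To upgrade this to an identity on all of $D$, I would note that on $\{g_\ell=0\}$ every summand $a_{\ell k}f_k^2$ vanishes, so for such a pair both $f_j$ and $f_k$ vanish there; since $\na f_k = 0$ a.e. on $\{f_k=0\}$ for $H^1$ functions, the combination $f_j\na f_k - f_k\na f_j$ also vanishes a.e. on $\{g_\ell=0\}$. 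Collecting the conditions over all $\ell$ then gives exactly that equality holds iff $f_j\na f_k - f_k\na f_j = 0$ for every $j\ne k$ for which $a_{\ell j}a_{\ell k}\ne 0$ for at least one $\ell$, as claimed.

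I expect the only genuine subtlety — the main obstacle — to be the bookkeeping in the equality case: identity (\ref{equal}) supplies the componentwise vanishing condition only on $\{g_\ell>0\}$, so some care is needed on the null set $\{g_\ell=0\}$ (via the fact that the gradient of an $H^1$ function vanishes a.e. on its zero set) in order that the per-$\ell$ conditions fuse into the single clean condition in the statement. The inequality itself, together with part (1), is essentially immediate once the reduction $g_\ell=|\mf{f}^{(\ell)}|$ is in place.
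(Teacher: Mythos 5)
Your proof is correct and follows essentially the same route as the paper: part (1) by interchanging the finite sums and using $\sum_{\ell}a_{\ell k}=1$, and part (2) by applying the identity (\ref{equal}) to the scaled tuple $(\st{a_{\ell 1}}f_1,\ldots,\st{a_{\ell n}}f_n)$ to get exactly the per-$\ell$ formula displayed in the paper's proof, then summing over $\ell$. Your extra remark on the equality case --- that on $\{g_{\ell}=0\}$ the relevant $f_j,f_k$ vanish and hence so does $f_j\na f_k-f_k\na f_j$ a.e., via the vanishing of the gradient of an $H^1$ function on its zero set --- is a careful touch the paper leaves implicit, but it does not change the argument.
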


\begin{proof}
The first assertion follows directly from the definition of redistribution.
For (2), from (\ref{equal}) we have
\[
\sum_k a_{\ell k}|\na f_k|^2 - |\na g_{\ell}|^2 =
\lt\{\begin{aligned}
&\fc{1}{g_{\ell}^2}\sum_{j < k}a_{\ell j}a_{\ell k}|f_j\na f_k - f_k\na f_j|^2
&\textup{ on where }g_{\ell}>0&\\
&0
&\textup{ on where }g_{\ell}=0&,
\end{aligned}\rt.
\]
and the assertion is obtained by summing over all $\ell$.
\end{proof}

In this work, we will consider mass redistributions of $\mf{u}\in\ml{A}$.
Since the square of the amplitude of a wave function represents its mass distribution, we use the adjective ``mass'' to stress that it's a redistribution of the squares\footnote{And one can naturally generalize the idea to $p$-th power redistribution, which may be useful in studying systems with $p$-Laplacian terms.}. To save notation, in the following we shall omit it and simply say ``redistribution''. 
Note that if $\mf{u}\in\ml{A}$ and $\mf{v}=(v_1,v_0,v_{-1})$ is a redistribution of $\mf{u}$, then from (1) of Proposition \ref{prop2.2},
$\mf{v}$ satisfies (C1) automatically and $H_2(\mf{v}) \equiv H_2(\mf{u})$. These facts together with (2) of the proposition allow us to give unified and simple justifications of the two properties mentioned in the introduction. 

%------------------------------------------------------------------------------%
%                                                                              %
%                                                                              %
%                                SECTION 3                                     %
%                                                                              %
%                                                                              %
%------------------------------------------------------------------------------%

\section{Ferromagnetic systems}

In this section we assume $c_s<0$, and the goal is to prove the validity of SMA.
That is we want to show $\ml{G}\subset\ml{A}_1$. The idea is to find, for 
$\mf{u}\in\ml{A}$, a redistribution of $\mf{u}$ in $\ml{A}_1$ that has no 
larger energy than $\mf{u}$, and then try to conclude that $\mf{u}$ must itself be the redistributed element provided $\mf{u}\in\ml{G}$. 

Now given any $\mf{u}\in\ml{A}$. It's easy to see that a redistribution of $\mf{u}$ in $\ml{A}_1$ can be expressed as $\bs{\ga}|\mf{u}|$, where 
$\bs{\ga}=(\ga_1,\ga_0,\ga_{-1})$ is any triple of nonnegative constants satisfying
\begin{equation} \label{adconst}
\lt\{\begin{aligned}
&\ga_1^2+\ga_0^2+\ga_{-1}^2 = 1\\
&\ga_1^2 - \ga_{-1}^2 = M/N.
\end{aligned}\rt.
\end{equation}
Let $\Ga$ denote the set containing all such $\bs{\ga}$: 
\[
\Ga := \lt\{ \bs{\ga}\in\mb{R}^3 \ \big| \ \ga_j \ge 0 \mbox{ for each }j,\  \bs{\ga} \mbox{ satisfies } (\ref{adconst}) \rt\}.
\]
Then 
$H_2(\bs{\ga}|\mf{u}|) \equiv H_2(\mf{u})$ for each $\bs{\ga}\in\Ga$. On the other hand, 
\[
H_1(\bs{\ga}|\mf{u}|) = |\na|\mf{u}||^2 + c_s P(\bs{\ga})|\mf{u}|^4,
\]
where
\begin{align*}
P(\bs{\ga})=2\ga_0^2(\ga_1 + \ga_{-1})^2+\fc{M^2}{N^2}.
\end{align*}
Since $c_s<0$, 
\[
\min_{\bs{\ga}\in\Ga} \lt( c_s P(\bs{\ga})|\mf{u}|^4 \rt)
= c_s \bigg(\max_{\bs{\ga}\in\Ga}P(\bs{\ga})\bigg)|\mf{u}|^4,
\] 
and it's easy to check that 
\[
\max_{\bs{\ga}\in\Ga}P(\bs{\ga})=P(\bs{\ga}^{\star})=1,
\]
where the maximizer $\bs{\ga}^{\star}=(\ga_1^{\star},\ga_0^{\star},\ga_{-1}^{\star})$ is unique and is given by
\begin{align*}
\ga_1^{\star} = \fc{1}{2}\lt(1+\fc{M}{N}\rt),\
\ga_0^{\star} = \sqrt{ \fc{1}{2} \lt( 1-\fc{M^2}{N^2} \rt) },\mbox{ and }
\ga_{-1}^{\star} = \fc{1}{2}\lt( 1-\fc{M}{N} \rt).
\end{align*}
On the other hand, we already know $|\na|\mf{u}||^2\le\sum_j|\na u_j|^2$. Thus we have proved
\begin{align}\label{lem1}
H(\bs{\ga}^{\star}|\mf{u}|) = |\na|\mf{u}||^2 + c_s|\mf{u}|^4 \le
H(\mf{u})
\end{align}
for any $\mf{u} \in \ml{A}$. We can now prove the first characterization of ground states by examining the condition for equality of (\ref{lem1}).

\begin{thm}\label{thm1}
Assume $c_s<0$. If $\mf{u}\in\ml{G}$, then $\mf{u} = \bs{\ga}^\star|\mf{u}|$.
\end{thm}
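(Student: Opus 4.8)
The plan is to test the minimizer against its own optimal single-mode redistribution. Since $\bs{\ga}^\star\in\Ga$, the field $\bs{\ga}^\star|\mf{u}|$ belongs to $\ml{A}_1\subset\ml{A}$, so $\mf{u}\in\ml{G}$ gives $\mb{E}[\mf{u}]\le\mb{E}[\bs{\ga}^\star|\mf{u}|]$. Integrating the pointwise inequality (\ref{lem1}) over $D$ yields the opposite bound $\mb{E}[\bs{\ga}^\star|\mf{u}|]\le\mb{E}[\mf{u}]$. Hence equality holds throughout, and because the integrand of $\mb{E}[\mf{u}]-\mb{E}[\bs{\ga}^\star|\mf{u}|]$ is nonnegative it must vanish a.e.; the $C^1$ regularity of $\mf{u}$ then makes it vanish everywhere on $D$.

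Next I would separate the two equality conditions encoded in (\ref{lem1}). Writing $H(\mf{u})-H(\bs{\ga}^\star|\mf{u}|)$ as the sum of the gradient defect $\sum_j|\na u_j|^2-|\na|\mf{u}||^2\ge 0$ and the spin defect $|c_s|(u_0^2-2u_1u_{-1})^2\ge 0$ --- the latter resting on the algebraic identity $|\mf{u}|^4-[2u_0^2(u_1+u_{-1})^2+(u_1^2-u_{-1}^2)^2]=(u_0^2-2u_1u_{-1})^2$ --- forces each summand to vanish. By (\ref{equal}) (equivalently Proposition \ref{prop2.2}) the first gives $u_j\na u_k-u_k\na u_j=0$ for all $j\ne k$, and the second gives $u_0^2=2u_1u_{-1}$ on all of $D$.

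I would then promote these relations to $\mf{u}=\bs{\ga}^\star|\mf{u}|$. Lemma \ref{lemma} says each $u_j$ is either identically $0$ or strictly positive on $D$. If $u_1\equiv 0$, then $u_0^2=2u_1u_{-1}$ forces $u_0\equiv 0$, whence (C1) and (C2) give $M=-N$, contradicting $|M|<N$; symmetrically $u_{-1}\not\equiv 0$. So $u_1,u_{-1}>0$, and then $u_0^2=2u_1u_{-1}>0$ gives $u_0>0$ as well. Since $D$ is connected and all components are positive, $u_j\na u_k-u_k\na u_j=0$ means $\na(u_j/u_k)=0$, so each ratio is constant and $u_j=\ga_j|\mf{u}|$ for nonnegative constants $\ga_j$ with $\ga_1^2+\ga_0^2+\ga_{-1}^2=1$. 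The constraints (C1) and (C2) then force (\ref{adconst}), while $u_0^2=2u_1u_{-1}$ gives $\ga_0^2=2\ga_1\ga_{-1}$; these three scalar equations determine $\bs{\ga}=\bs{\ga}^\star$ uniquely.

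The step I expect to be the main obstacle is upgrading the pointwise, Wronskian-type identities to a single global proportionality, that is, excluding profiles in which a component vanishes only partially or the components are merely locally parallel on disconnected pieces. This is precisely where Lemma \ref{lemma} and the strict bound $|M|<N$ are indispensable: the strong-maximum-principle dichotomy forbids a component from vanishing on a proper subset of $D$, the bound $|M|<N$ rules out the degenerate single-component configurations, and connectedness of $D$ converts the vanishing of $\na(u_j/u_k)$ into constant ratios throughout $D$.
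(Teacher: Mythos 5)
Your proposal is correct and follows essentially the same route as the paper: compare $\mf{u}$ with its redistribution $\bs{\ga}^\star|\mf{u}|$, force equality in (\ref{lem1}), split the defect into the gradient term and $|c_s|(u_0^2-2u_1u_{-1})^2$, and then use Lemma \ref{lemma} together with connectedness of $D$ and $|M|<N$ to upgrade the pointwise identities to constant ratios and pin down $\bs{\ga}^\star$. The only cosmetic difference is that you first show all three components are strictly positive before dividing, whereas the paper divides by whichever single component is positive; both work.
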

\begin{proof}
\begin{align*}
H(\mf{u})-H(\bs{\ga}^{\star}|\mf{u}|)
&=H_1(\mf{u})-H_1(\bs{\ga}^{\star}|\mf{u}|)\\
&=\big(\sum_j|\na u_j|^2 - |\na|\mf{u}||^2\big) - c_s \lt\{|\mf{u}|^4-\Big[2u_0^2(u_1 + u_{-1})^2 + (u_1^2-u_{-1}^2)^2\Big]\rt\}\\
&=\big(\sum_j|\na u_j|^2 - |\na|\mf{u}||^2\big) - c_s(u_0^2-2u_1u_{-1})^2
\end{align*}
by direct calculation. If $\mf{u}\in\ml{G}$, we must have 
$\mb{E}[\mf{u}]=\mb{E}[\bs{\ga}^{\star}|\mf{u}|]$, which implies 
$H(\mf{u})=H(\bs{\ga}^{\star}|\mf{u}|)$, and hence
\begin{align}
u_j\na u_k - u_k\na u_j &= 0\quad\textup{for}\quad j\neq k\,;\label{15}\\
u_0^2 - 2u_1u_{-1} &= 0.\label{16}
\end{align}
Since we assume the total number of atoms $N>0$, from Lemma \ref{lemma} at least one $u_j$ is strictly positive in $D$. Assume $u_1>0$ on $D$. Then from (\ref{15}) we have 
\begin{equation}\label{ezero}
\na\lt(\fc{u_0}{u_1}\rt) = \na\lt(\fc{u_{-1}}{u_1}\rt)=0.
\end{equation}
Since $D$ is connected, (\ref{ezero}) implies $u_0$ and $u_{-1}$ are both constant multiples of $u_1$. This shows $\mf{u} \in \ml{A}_1$, and (ii) follows either by (\ref{16}) or by the fact that $\bs{\ga}^{\star}$ is the unique maximizer of $P$ over $\Ga$. The case $u_0>0$ and $u_{-1}>0$ can be proved similarly.
\end{proof}

The above theorem implies that searching for ground states of ferromagnetic spin-1 BEC can be reduced to a ``one-component'' minimization problem. 
Precisely, let 
\begin{equation}\label{sing}
\ml{A}^s 
=\lt\{|\mf{u}|\ | \ \mf{u}\in\ml{A}\rt\}
=\lt\{ f \in H^1(D) \cap L^4(D)\cap L^2(D,Vdx) \ \big| \ 
\textstyle{\int_D f^2 = N} \rt\},
\end{equation}
and define $\mb{E}^s:\ml{A}^s\to\mb{R}$,
\begin{align*}
\mb{E}^s[f] = \int_D \lt\{|\na f|^2 + Vf^2 + (c_n + c_s)f^4\rt\}.
\end{align*}
Then $\mb{E}[\bs{\ga}^{\star}f]=\mb{E}^s[f]$ for $f\in\ml{A}^s$. 
Also let
\begin{align*}
\ml{G}^s = \lt\{
f\in\ml{A}^s \ \lt| \ \mb{E}^s[f] = \min_{g\in\ml{A}^s}\mb{E}^s[g]
\rt.\rt\}.
\end{align*}
Then if $\mf{u}\in\ml{G}$, $\mf{u}=\bs{\ga}^{\star}|\mf{u}|$, and hence
\begin{align*}
\mb{E}^s[|\mf{u}|]=\mb{E}[\bs{\ga}^{\star}|\mf{u}|]  
\le \mb{E}[\bs{\ga}^{\star}f] =\mb{E}^s[f]
\end{align*}
for every $f\in\ml{A}^s$. Thus $|\mf{u}|\in\ml{G}^s$. Conversely if 
$f\in\ml{G}^s$, then 
\begin{align*}
\mb{E}[\bs{\ga}^{\star}f]=\mb{E}^s[f]
\le\mb{E}^s[|\mf{u}|]=\mb{E}[\bs{\ga}^{\star}|\mf{u}|]
\le\mb{E}[\mf{u}]
\end{align*}
for every $\mf{u}\in\ml{A}$, and hence $\bs{\ga}^{\star}f\in\ml{G}$. We thus obtain the following characterization of $\ml{G}$.

\begin{cor}\label{cor3.2}
$\ml{G} = \lt\{ \bs{\ga}^{\star}f \ | \ f\in\ml{G}^s\rt\}$.
\end{cor}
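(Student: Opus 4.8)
The plan is to prove the set equality $\ml{G}=\{\bs{\ga}^{\star}f\mid f\in\ml{G}^s\}$ by the two inclusions, reducing in each direction an energy comparison on $\ml{A}$ to one on $\ml{A}^s$ (and conversely) through three facts already established above: the energy identity $\mb{E}[\bs{\ga}^{\star}f]=\mb{E}^s[f]$ for $f\in\ml{A}^s$, the pointwise bound $H(\bs{\ga}^{\star}|\mf{u}|)\le H(\mf{u})$ from \eqref{lem1}, and Theorem \ref{thm1}, which says every $\mf{u}\in\ml{G}$ equals $\bs{\ga}^{\star}|\mf{u}|$ with the \emph{specific} coefficients $\bs{\ga}^{\star}$. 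Before running the comparisons I would record the constraint bookkeeping that makes them legitimate: for any $f\in\ml{A}^s$ the vector $\bs{\ga}^{\star}f$ lies in $\ml{A}$, since \eqref{adconst} is exactly the condition that $\int_D(\ga_1^{\star 2}+\ga_0^{\star 2}+\ga_{-1}^{\star 2})f^2=N$ and $\int_D(\ga_1^{\star 2}-\ga_{-1}^{\star 2})f^2=M$, i.e. (C1) and (C2); and conversely $|\mf{u}|\in\ml{A}^s$ whenever $\mf{u}\in\ml{A}$, by (1) of Proposition \ref{prop2.2}.

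For the inclusion $\ml{G}\subseteq\{\bs{\ga}^{\star}f\mid f\in\ml{G}^s\}$, take $\mf{u}\in\ml{G}$. By Theorem \ref{thm1} we have $\mf{u}=\bs{\ga}^{\star}|\mf{u}|$, so it suffices to show that $f:=|\mf{u}|$ minimizes $\mb{E}^s$. For an arbitrary competitor $g\in\ml{A}^s$, the bookkeeping above gives $\bs{\ga}^{\star}g\in\ml{A}$, and then the global minimality of $\mf{u}$ over $\ml{A}$ yields the chain $\mb{E}^s[f]=\mb{E}[\bs{\ga}^{\star}f]=\mb{E}[\mf{u}]\le\mb{E}[\bs{\ga}^{\star}g]=\mb{E}^s[g]$. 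Hence $f\in\ml{G}^s$ and $\mf{u}=\bs{\ga}^{\star}f$ has the required form.

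For the reverse inclusion, take $f\in\ml{G}^s$; I would show $\bs{\ga}^{\star}f\in\ml{G}$ by verifying it minimizes $\mb{E}$ over $\ml{A}$. For arbitrary $\mf{u}\in\ml{A}$ we have $|\mf{u}|\in\ml{A}^s$, and combining the minimality of $f$ over $\ml{A}^s$ with the integrated form of \eqref{lem1} gives $\mb{E}[\bs{\ga}^{\star}f]=\mb{E}^s[f]\le\mb{E}^s[|\mf{u}|]=\mb{E}[\bs{\ga}^{\star}|\mf{u}|]\le\mb{E}[\mf{u}]$. Since $\bs{\ga}^{\star}f\in\ml{A}$, this shows $\bs{\ga}^{\star}f\in\ml{G}$. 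I do not expect a genuine obstacle here: the argument is a clean minimizer-correspondence between the two variational problems, and both inclusions hinge on the same two inequalities run in opposite orders. The only points demanding care are the constraint bookkeeping of the first paragraph and the precise use of Theorem \ref{thm1} to replace a ground state by $\bs{\ga}^{\star}|\mf{u}|$ with exactly the maximizing coefficients, which is what lets the comparison factor through the one-component energy $\mb{E}^s$.
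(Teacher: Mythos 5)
Your proof is correct and follows essentially the same route as the paper: both inclusions are obtained from the identity $\mb{E}[\bs{\ga}^{\star}f]=\mb{E}^s[f]$, the inequality $\mb{E}[\bs{\ga}^{\star}|\mf{u}|]\le\mb{E}[\mf{u}]$, and Theorem \ref{thm1}, chained in exactly the order the paper uses. The only difference is that you make the constraint bookkeeping ($\bs{\ga}^{\star}f\in\ml{A}$ and $|\mf{u}|\in\ml{A}^s$) explicit, which the paper leaves implicit.
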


\begin{remark}
We can add more assumptions in the definition of $\ml{A}$. The only thing we need to take care is that we need $\bs{\ga}^{\star}|\mf{u}|\in\ml{A}$ whenever $\mf{u}\in\ml{G}$, so that $\mb{E}[\mf{u}]\le\mb{E}[\bs{\ga}^{\star}|\mf{u}|]$ is not violated. In particular, in case that a homogeneous boundary condition (e.g. homogeneous Dirichlet or Neumann boundary condition) is considered, the induced boundary condition for $\bs{\ga}^{\star}|\mf{u}|$ is also homogeneous of the same kind, and Theorem \ref{thm1} (and hence Corollary \ref{cor3.2}) remains valid.
\end{remark}

%------------------------------------------------------------------------------%
%                                                                              %
%                                                                              %
%                                SECTION 4                                     %
%                                                                              %
%                                                                              %
%------------------------------------------------------------------------------%

\section{Antiferromagnetic systems and some degenerate cases}

The main focus of this section is the phenomenon $u_0\equiv 0$. After justifying it in Section 4.1, some degenerate situations are also discussed in Section 4.2.

\subsection{Justification of the vanishing phenomenon}

Assume $c_s>0$ in this subsection.
We want to show that any ground state must have a vanishing zeroth component, and hence is a two-component BEC. Similar to the approach in the previous section, we want to find an appropriate redistribution 
$\wt{\mf{u}}\in\ml{A}_2$ of $\mf{u}\in\ml{A}$ so that 
$\mb{E}[\wt{\mf{u}}]\le\mb{E}[\mf{u}]$. Now, not as before, the assumption
$\wt{\mf{u}}\in\ml{A}_2$ doesn't give rise to a definite candidate of 
$\wt{\mf{u}}$. In view that such $\wt{\mf{u}}$ satisfies 
$|\wt{\mf{u}}|=|\mf{u}|$ and hence (C1), as a guess, we try imposing the additional assumption that $\wt{\mf{u}}$ also satisfies 
\[
\wt{u}_{1}^2 - \wt{u}_{-1}^2 = u_1^2 - u_{-1}^2,
\]
so that (C2) is also satisfied by $\wt{\mf{u}}$ automatically. This results in only one possibility, that is 
\begin{equation}\label{utilde}
\wt{u}_j = \sqrt{u_j^2+\fc{u_0^2}{2}}\quad\textup{for}\quad j=1,-1.
\end{equation}
For any $\mf{u} \in \ml{A}$, we then let $\wt{\mf{u}}\in\ml{A}_2$ be its redistribution defined by \textup{(\ref{utilde})}. It's fortunate that it works.
In fact,
\begin{align}\label{lem2}
H(\mf{u})-H(\wt{\mf{u}})=\big(\sum_j|\na u_j|^2-\sum_j|\na\wt{u}_j|^2\big) + 2c_su_0^2(u_1-u_{-1})^2\ge 0,
\end{align}
and we have the following analogue of Theorem \ref{thm1}.

\begin{thm}\label{thm2}
Assume $c_s > 0$ and $M \ne 0$, then
$\mf{u}\in\ml{G}$ implies $\mf{u} = \wt{\mf{u}}$.
\end{thm}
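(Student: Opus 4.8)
The plan is to mirror the proof of Theorem \ref{thm1}, using the pointwise identity (\ref{lem2}) together with the minimality of $\mf{u}$. Since $\wt{\mf{u}}$ is a redistribution of $\mf{u}$ lying in $\ml{A}_2 \subset \ml{A}$ --- it satisfies (C1) because $|\wt{\mf{u}}| = |\mf{u}|$, and (C2) by the very construction (\ref{utilde}) --- it is an admissible competitor. First I would note that the right-hand side of (\ref{lem2}) is a sum of two nonnegative terms: the gradient difference $\sum_j|\na u_j|^2 - \sum_j|\na\wt{u}_j|^2 \ge 0$ by Proposition \ref{prop2.2}(2), and $2c_s u_0^2(u_1 - u_{-1})^2 \ge 0$ since $c_s > 0$. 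Integrating (\ref{lem2}) gives $\mb{E}[\mf{u}] - \mb{E}[\wt{\mf{u}}] \ge 0$, while $\mf{u}\in\ml{G}$ forces $\mb{E}[\mf{u}] \le \mb{E}[\wt{\mf{u}}]$; hence $\int_D[H(\mf{u}) - H(\wt{\mf{u}})] = 0$. Because the integrand is a sum of two nonnegative terms, each vanishes almost everywhere, and since $\mf{u}$ is continuous the identity $u_0^2(u_1 - u_{-1})^2 = 0$ in fact holds at every point of $D$.

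The decisive step is to upgrade this pointwise alternative --- at each point either $u_0 = 0$ or $u_1 = u_{-1}$ --- into a global statement. Here I would invoke Lemma \ref{lemma}, by which the zeroth component of a ground state is either identically zero or strictly positive throughout $D$. If $u_0 > 0$ on all of $D$, then $u_1 = u_{-1}$ at every point, whence $M = \int_D(u_1^2 - u_{-1}^2) = 0$, contradicting the hypothesis $M \ne 0$. Therefore $u_0 \equiv 0$, and then (\ref{utilde}) yields $\wt{u}_j = \sqrt{u_j^2} = u_j$ for $j = 1,-1$ while $\wt{u}_0 = 0 = u_0$, so that $\mf{u} = \wt{\mf{u}}$ as asserted.

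The heart of the argument, and the place where the hypotheses are genuinely used, is this final dichotomy: it is the strong maximum principle packaged in Lemma \ref{lemma} that converts the local alternative into a clean global one, and it is precisely the nonzero magnetization $M \ne 0$ that rules out the branch $u_1 \equiv u_{-1}$ and so forces the zeroth component to disappear. For $M = 0$ this exclusion collapses, which anticipates the breakdown of the vanishing phenomenon in the degenerate case of Section 4.2. I do not expect the gradient-equality condition of Proposition \ref{prop2.2}(2) to enter the proof of $\mf{u} = \wt{\mf{u}}$ at all: the vanishing of the spin-exchange contribution already determines $u_0$, so unlike in Theorem \ref{thm1} no information about the ratios $u_j/u_k$ is needed.
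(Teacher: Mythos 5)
Your proof is correct and follows essentially the same route as the paper's: integrate the pointwise inequality (\ref{lem2}), use minimality to force both nonnegative terms to vanish, and then apply Lemma \ref{lemma} together with $M \ne 0$ to exclude the branch $u_1 \equiv u_{-1}$ and conclude $u_0 \equiv 0$, hence $\mf{u} = \wt{\mf{u}}$. Your closing observation that the gradient-equality clause of Proposition \ref{prop2.2}(2) is not needed here, in contrast to Theorem \ref{thm1}, is also accurate.
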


\begin{proof}
Assume $\mf{u}\in\ml{G}$. From (\ref{lem2}), 
$\mb{E}[\mf{u}]=\mb{E}[\wt{\mf{u}}]$, and
\begin{align*}
u_0^2(u_1-u_{-1})^2 \equiv 0.
\end{align*}
By Lemma \ref{lemma}, we have either $u_0 \equiv 0$ or $u_1 \equiv u_{-1}$. However since we assume $M \ne 0$, we cannot have $u_1 \equiv u_{-1}$, and the assertion follows.
\end{proof}

\subsection{Some degenerate situations}

The requirement $M \ne 0$ in Theorem \ref{thm2} is necessary. In fact, for 
$M = 0$, SMA is again valid while ground states are not unique, 
and $u_0 \equiv 0$ is not necessarily the case. Precisely, consider the minimization problem (recall that $\ml{A}^s$ is defined by (\ref{sing}))
\begin{equation}\label{degm}
\min_{f\in\ml{A}_s}\int_D \lt\{ |\na f|^2 + Vf^2 + c_n f^4 \rt\},
\end{equation}
and we have the following characterization.

\begin{thm}
If $c_s>0,M = 0$ or $c_s = 0$, then
\[
\ml{G} = \lt\{\lt.\big(t,\sqrt{1-2t^2},t\big)f \ \rt| \
0\le t\le 1/\sqrt{2},\, f \mbox{ is a } \mbox{ solution of }(\ref{degm})
\rt\}.
\]
\end{thm}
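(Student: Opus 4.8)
The plan is to collapse the whole vector problem onto the scalar minimization (\ref{degm}): I will show that for these degenerate parameters the spin-dependent part of the energy contributes nothing along the configurations that matter, so that the minimum of $\mb{E}$ over $\ml{A}$ equals the minimum value $m$ of (\ref{degm}), and then identify exactly which $\mf{u}$ attain it. Write $F[f]=\int_D\{|\na f|^2+Vf^2+c_nf^4\}$ for the functional in (\ref{degm}) and $m=\min_{f\in\ml{A}^s}F[f]$, where $\ml{A}^s$ is the set (\ref{sing}). Throughout I read the hypothesis as $M=0$ in both alternatives; this is genuinely needed also for $c_s=0$, since otherwise (C2) would forbid $\ga_1=\ga_{-1}$ in the asserted family.

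First I would establish a universal lower bound. For any $\mf{u}\in\ml{A}$: if $c_s=0$ the spin term is absent, while if $c_s>0$ the spin term $c_s[2u_0^2(u_1-u_{-1})^2+(u_1^2-u_{-1}^2)^2]$ (minus sign, since $c_s>0$) is nonnegative. In either case, discarding it and using $|\na|\mf{u}||^2\le\sum_j|\na u_j|^2$ gives $\mb{E}[\mf{u}]\ge F[|\mf{u}|]\ge m$, because $|\mf{u}|\in\ml{A}^s$. For the reverse containment I would check directly that for $\mf{u}=(t,\sqrt{1-2t^2},t)f$ one has $u_1=u_{-1}=tf$, so the spin term vanishes identically and the remaining terms collapse to $F[f]$; one also verifies that such $\mf{u}$ lies in $\ml{A}$ (it is nonnegative, satisfies (C1), and satisfies (C2) precisely because $M=0$). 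Hence whenever $f$ solves (\ref{degm}) we get $\mb{E}[\mf{u}]=m$, so $\mf{u}\in\ml{G}$; this proves the displayed set is contained in $\ml{G}$ and that the minimum of $\mb{E}$ over $\ml{A}$ equals $m$.

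The crux is the opposite inclusion. Given $\mf{u}\in\ml{G}$, equality must hold throughout the lower bound, which forces three things: (a) $|\mf{u}|$ is a minimizer of (\ref{degm}); (b) $\sum_j|\na u_j|^2=|\na|\mf{u}||^2$, so by (\ref{equal}) we get $u_j\na u_k-u_k\na u_j=0$ for all $j\ne k$; and (c) when $c_s>0$, the spin term vanishes, whence $u_1=u_{-1}$. Exactly as in the proof of Theorem \ref{thm1}, Lemma \ref{lemma} provides an index $j^\star$ with $u_{j^\star}>0$ on all of $D$ (as $N>0$), and then (b) together with the connectedness of $D$ shows each $u_k$ is a constant multiple of $u_{j^\star}$, i.e. $\mf{u}=\bs{\ga}|\mf{u}|$ for a constant nonnegative unit vector $\bs{\ga}$. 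Finally (C2) with $M=0$ (or condition (c) when $c_s>0$) gives $\ga_1=\ga_{-1}=:t$, and normalization gives $\ga_0=\sqrt{1-2t^2}$ with $0\le t\le 1/\sqrt{2}$; combined with (a) this is precisely the asserted form.

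The main obstacle is this reverse inclusion, and within it the passage from the kinetic-energy equality to the single-mode structure via Lemma \ref{lemma} and the connectedness of $D$. Since that step runs parallel to Theorem \ref{thm1}, the genuinely new feature here is the non-uniqueness: every $t\in[0,1/\sqrt{2}]$ yields a distinct ground state of the same energy, a degeneracy made transparent by the membership computation and reflecting that, once the single-mode form and $M=0$ hold, the energy no longer sees how the mass is split among the three components.
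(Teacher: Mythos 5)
Your proposal is correct and follows essentially the same route as the paper: both reduce to the scalar problem (\ref{degm}) by discarding the nonnegative spin term and applying the kinetic-energy inequality of Proposition \ref{prop2.2}, then recover the single-mode structure from the equality case via Lemma \ref{lemma} and the connectedness of $D$, exactly as in Theorem \ref{thm1}. Your write-up merely makes explicit the details that the paper compresses into ``the remaining of the proof is the same as in Section 3,'' and your reading that $M=0$ is implicitly assumed in the $c_s=0$ alternative matches the paper's own proof.
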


\begin{proof}
Note that since $M = 0$, from (\ref{adconst}),  $\bs{\ga}\in\Ga$ implies 
\[
\bs{\ga} = \Big(t,\sqrt{1-2t^2},t\Big) \quad \mbox{for some}\quad t \in \lt[0,1/\sqrt{2}\,\rt].
\]
Now it's easy to see that for any $\mf{u}\in\ml{A}$ and $\bs{\ga}\in\Ga$ we have
\[
H(\bs{\ga}|\mf{u}|) = |\na|\mf{u}||^2 + V|\mf{u}|^2 + c_n|\mf{u}|^4,
\]
which satisfies $H(\bs{\ga}|\mf{u}|) \le H(\mf{u})$ by Proposition \ref{prop2.2}, and the remaining of the proof is the same as in Section 3. 
\end{proof}

In contrast to the above theorem, the following corollary of Theorem \ref{thm2} shows that SMA is almost never the case when $M\ne 0$.

\begin{cor}
Assume $c_s>0$ and $M\ne 0$, then
$\mf{u} \in \ml{G} \cap \ml{A}_1$ implies 
$u_1$, $u_{-1}$ and $V$ are constants.
\end{cor}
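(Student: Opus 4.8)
The plan is to combine Theorem \ref{thm2} with the Euler--Lagrange system (\ref{ELeq}); once the former is in hand the corollary is essentially a direct computation. First I would invoke Theorem \ref{thm2}: since $c_s>0$ and $M\ne 0$, any $\mf{u}\in\ml{G}$ satisfies $\mf{u}=\wt{\mf{u}}$, and as $\wt{\mf{u}}\in\ml{A}_2$ this gives $u_0\equiv 0$. Combined with the hypothesis $\mf{u}\in\ml{A}_1$, I may write $u_1=\ga_1 f$ and $u_{-1}=\ga_{-1}f$ for a common function $f$ and nonnegative constants $\ga_1,\ga_{-1}$ (the zeroth coefficient must vanish because $u_0\equiv 0$ while $f\not\equiv 0$). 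By Lemma \ref{lemma}, $f>0$ on all of $D$, which is what will let me divide freely below; moreover $\int_D(u_1^2-u_{-1}^2)=M\ne 0$ forces $\ga_1^2-\ga_{-1}^2\ne 0$.

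Next I would substitute $u_0\equiv 0$ into (\ref{ELeq}); with $c_s>0$ the sign is minus, so the first and third equations read
\[
(\mu+\la)u_1 = \ml{L}u_1 + 2c_s u_1(u_1^2-u_{-1}^2),\qquad
(\mu-\la)u_{-1} = \ml{L}u_{-1} + 2c_s u_{-1}(u_{-1}^2-u_1^2),
\]
where $\ml{L}=-\Da + V + 2c_n(u_1^2+u_{-1}^2)$. Dividing the first by $\ga_1$ and the second by $\ga_{-1}$ produces two identities sharing the common term $\ml{L}f$; subtracting them cancels $\ml{L}f$ and, using $f>0$, yields
\[
\la = 2c_s(u_1^2 - u_{-1}^2) = 2c_s(\ga_1^2-\ga_{-1}^2)f^2 \quad\text{on } D.
\]
Since $\la$ is a constant and $2c_s(\ga_1^2-\ga_{-1}^2)\ne 0$, this forces $f^2$, and hence $f$, $u_1$ and $u_{-1}$, to be constant.

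Finally, with $u_1,u_{-1}$ constant we have $\Da u_1=0$, so the first equation of (\ref{ELeq}) collapses to $(\mu+\la)u_1 = Vu_1 + 2c_n(u_1^2+u_{-1}^2)u_1 + 2c_s u_1(u_1^2-u_{-1}^2)$; dividing by $u_1>0$ and solving for $V$ expresses it solely in terms of the constants $\mu,\la,c_n,c_s,u_1,u_{-1}$, so $V$ is constant as well. The only point needing care is the cancellation step: dividing by $\ga_1,\ga_{-1}$ and by $f$ relies on the strict positivity furnished by Lemma \ref{lemma}, and the conclusion that $f$ is constant relies on $M\ne 0$ ensuring $\ga_1^2\ne\ga_{-1}^2$. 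Beyond this bookkeeping of the constants I expect no genuine obstacle.
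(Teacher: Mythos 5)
Your proposal is correct and follows essentially the same route as the paper: reduce to the two-component Euler--Lagrange system via Theorem \ref{thm2}, use the SMA hypothesis to write $u_1$ and $u_{-1}$ as constant multiples of a common positive function (the paper writes $u_{-1}=\kappa u_1$ rather than $u_{\pm1}=\ga_{\pm1}f$, a cosmetic difference), subtract the two equations to get $\la=2c_s(u_1^2-u_{-1}^2)$ and hence constancy of the components, then solve for $V$. The only bookkeeping worth making explicit is that $\ga_1,\ga_{-1}\ne 0$ (equivalently $u_1,u_{-1}\not\equiv 0$) follows from the standing assumption $|M|<N$ together with $u_0\equiv 0$, which is exactly how the paper justifies the division.
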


\begin{proof}
By Theorem \ref{thm2}, the Euler-Lagrange system (\ref{ELeq}) is reduced to the following two-component system:
\begin{equation}
\lt\{\begin{aligned}\label{345}
(\mu + \la) u_1 &= \ml{L}u_1 + 2c_s u_1(u_1^2-u_{-1}^2)\\
(\mu - \la) u_{-1} &= \ml{L}u_{-1} + 2c_s u_{-1}(u_{-1}^2-u_1^2),
\end{aligned}\rt.\end{equation}
where $\ml{L}=-\Delta + V + 2c_n(u_1^2 + u_{-1}^2)$.

Recall that we assume $-N<M<N$, thus, for $j=1,-1$, $u_j>0$ on $D$. So $\mf{u} \in \ml{A}_1$ implies $u_{-1}=\kappa u_{1}$ for some constant $\kappa>0$. Also note that $\kappa\ne 1$ since $M\ne 0$. The system (\ref{345}) then gives the following two equations for $u_1$:
\begin{align}
(\mu + \lambda) u_1&=-\Delta u_1 + Vu_1 + 2c_n(1+\kappa^2)u_1^3 + 2c_s(1-\kappa^2)u_1^3;\label{e1}\\
(\mu - \lambda) u_1&=-\Delta u_1 + Vu_1 + 2c_n(1+\kappa^2)u_1^3 + 2c_s(\kappa^2-1)u_1^3.\label{e2}
\end{align}
Now (\ref{e1}) minus (\ref{e2}) gives $\la u_1 = 2c_s(1-\kappa^2)u_1^3$.
Since $u_1>0$ on $D$, we get 
\[
u_1 = \sqrt{\fc{\la}{2c_s(1-\kappa^2)}}\,.
\] 
In particular $u_1$ and $u_{-1}=\kappa u_1$ are constants. Hence 
$\Da u_1= 0$, and then (\ref{e1}) plus (\ref{e2}) gives
\[
\mu u_1 = V u_1 + 2c_n(1+\kappa^2)u_1^3,
\]
from which we get
\[
V=\mu - 2c_n(1+\kappa^2)u_1^2 = \mu- \fc{c_n(1+\kappa^2)}{c_s(1-\kappa^2)}\lambda,
\]
which is also a constant.
\end{proof}

\bibliographystyle{plain}
\bibliography{}

\end{document}